\newtheorem{invariant}{Invariant}[section]
\renewenvironment{proof}{{\bf Proof:  }}{\hfill\rule{2mm}{2mm}\vspace*{5pt}}
\newcommand{\cancel}{\textsf{Cancel}}
\newcommand{\ins}{\textsf{Insert}}
\newcommand{\del}{\textsf{Delete}}
\newcommand{\create}{\textsf{Create}}
\newcommand{\query}{\textsf{Query}}
\newcommand{\now}{\textsf{now}}
\begin{document}

\title{Upper and Lower Bounds for Fully Retroactive Graph Problems\thanks{The full version of the paper can be found at \url{https://arxiv.org/abs/1910.03332}.}}
%
%
\author{Monika Henzinger\inst{1}
	\and
Xiaowei Wu\inst{2}
\thanks{This work was done in part when the author was a postdoc at University of Vienna.}}
\authorrunning{M. Henzinger and X. Wu}
%
\institute{Faculty of Computer Science, University of Vienna, Vienna, Austria\\
\email{monika.henzinger@univie.ac.at}
\and
IOTSC, University of Macau, Macau, China\\
\email{xiaoweiwu@um.edu.mo}
}
\maketitle           

\begin{abstract}
	Classic dynamic data structure problems maintain a data structure subject to a sequence $S$ of updates and they answer queries using the latest version of the data structure, i.e., the data structure after processing the whole sequence.
	To handle operations that change the sequence $S$ of updates, Demaine et al.~\cite{talg/DemaineIL07} introduced {\em retroactive data structures} (RDS).
	A retroactive operation modifies the update sequence $S$ in a given position $t$, called {\em time}, and either creates or cancels an update in $S$ at time $t$.
	A {\em fully retroactive} data structure supports queries at any time $t$: a query at time $t$ is answered using only the updates of $S$ up to time $t$.
	%
	While efficient RDS have been proposed for classic data structures, e.g., stack, priority queue and binary search tree, the retroactive version of graph problems are rarely studied.
	
	In this paper we study retroactive graph problems including connectivity, minimum spanning forest (MSF), maximum degree, etc.
	We show that under the OMv conjecture (proposed by Henzinger et al.~\cite{stoc/HenzingerKNS15}), there does not exist fully RDS maintaining connectivity or MSF, or incremental fully RDS maintaining the maximum degree with $O(n^{1-\epsilon})$ time per operation, for any constant $\epsilon > 0$.
	Furthermore, We provide RDS with \emph{almost tight} time per operation.
	We give fully RDS for maintaining the maximum degree, connectivity and MSF in $\tilde{O}(n)$ time per operation.
	We also give an algorithm for the incremental (insertion-only) fully retroactive connectivity with $\tilde{O}(1)$ time per operation, showing that the lower bound cannot be extended to this setting.
	
	
	We also study a restricted version of RDS, where the only change to $S$ is the swap of neighboring updates and show that for this problem we can beat the above hardness result. This also implies the first non-trivial dynamic Reeb graph computation algorithm.
\end{abstract}

\keywords{Retroactive Data Structure, Dynamic Connectivity}

\section{Introduction}

A dynamic data structure problem maintains a data structure on a set of elements subject to element insertions, deletions and modifications.
An efficient dynamic algorithm updates the data structure after each element update, and supports queries on the latest version of the data structure.
That is, an update can only append an operation to the end of the operation sequence, and a query can only be made on the data structure with all updates applied.
However, in some applications, we are interested in modifying the update sequence in the middle.
For example, if some past update on a database is mistaken and needs to be removed, we do not want to rollback the whole database by canceling all updates after the mistaken one.
Besides, in some scenarios we are interested in querying the data structure when only part of the updates are applied, e.g., to answer questions like ``which facebook user had the most friends in Jan 1st, 2015?''.
This motivates retroactive data structures (RDS) that were introduced by Demaine et al.~\cite{talg/DemaineIL07}.
They support (1) modifications to the historical sequence of updates performed on the data structure, and (2) queries on the data structure when only a prefix of the updates is applied.

Formally speaking, the data structure is defined by a sequence $S$ of updates, each of which is associated with a time $t$.
A RDS supports {operations} that create or cancel an {update} at any time $t$. 
There are $|S|+1$ versions of the data structure, on any of which a query can be made.
Throughout this paper, we use \emph{update} to denote a modification to the data structure, and \emph{operation} to denote a retroactive action that creates or cancels an update.
Depending on the queries supported, Demaine et al.~\cite{talg/DemaineIL07} defined two classes of RDS: a \emph{partially retroactive} data structure supports queries only at the present time, i.e., on the latest version of the data structure, while a \emph{fully retroactive} data structure supports queries on any version of the data structure.
For dynamic problems in which the ordering of updates is not important, e.g., maintaining a dictionary, standard dynamic algorithms are automatically partially retroactive.
However, maintaining a fully RDS can be much more difficult, as a retroactive operation at time $t$ can possibly change the outcome of \emph{all} queries after time $t$.
For example, an insertion of a very small key into a min-heap at time $t$ can possibly change the output of every find-min query after time $t$.
In general, there does not exist efficient transformation from partially RDS to fully retroactive ones.
Demaine et al.~\cite{talg/DemaineIL07} provided a general checkpointing method that converts a partially RDS into a fully retroactive one, with an $O(\sqrt{T})$ multiplicative overhead in the update and query time, where $T = |S|$.
Indeed, the $O(\sqrt{T})$ multiplicative overhead is shown to be tight for some data structures~\cite{swat/ChenDGWXY18}, under some well-known computational hardness conjectures.

\paragraph{Prior Works.}
Demaine et al.~\cite{talg/DemaineIL07} provided a partially retroactive priority queue with $O(\log T)$ update time and $O(1)$ query time, which implies a fully retroactive priority queue with $O(\sqrt{T}\log T)$ update and query time.
The result was later improved by Demaine et al.~\cite{wads/DemaineKLSY15}, who proposed a fully retroactive priority queue with amortized polylogarithmic update and query time.
They introduced a hierarchical checkpointing technique, which maintains a balanced binary tree with the set of updates as the leaves.
Giora and Kaplan~\cite{talg/GiyoraK09} considered the dynamic vertical ray shooting problem, and proposed a data structure that supports horizontal line segment insertions and deletions, and queries that report the first segment intersecting a vertical ray from a query point in worst case $O(\log T)$ time.
Their data structure implies a fully retroactive binary search tree with $O(\log T)$ update and query time. 

While dynamic graph problems flourished in the past decades, their retroactive versions are rarely studied.
Dynamic algorithms maintaining connectivity~\cite{jacm/HenzingerK99,jacm/HolmLT01}, minimum spanning forest (MSF)~\cite{jacm/HolmLT01,esa/HolmRW15} and maximal matching~\cite{sicomp/BaswanaGS18,focs/Solomon16,soda/BernsteinFH19} with polylogarithmic update and query time are known, but their fully retroactive versions have not been studied yet.
One exception is the empirical analysis of~\cite{de2020fully} on the fully retroactive minimum spanning tree (MST) problem.
For the aforementioned problems, the dynamic data structures are equivalent to the partially retroactive ones.
Thus by Demaine et al.'s reduction~\cite{talg/DemaineIL07}, there exist fully RDS for these problems, with $\tilde{O}(\sqrt{T})$ update and query time.\footnote{Throughout this paper we use $\tilde{O}()$ to hide the polylogarithmic factors in $T$ and $n$.}
Note that, in general, the number of updates $T$ can be much larger than the number of nodes and edges in the graph.
Roditty and Zwick~\cite{sicomp/RodittyZ16} proposed a fully RDS that supports queries of strong connectivity between two nodes at any version of the graph, subject to directed edge insertions and deletions.
However, the retroactive operations are restricted to be \emph{incremental}: each operation either creates an insertion of edge at the end of the update sequence, or cancels an existing update.
Their algorithm answers each query in worst case $O(1)$ time and handles each update in amortized $O(m\cdot \alpha(m, n))$ time, where $m$ is the number of edges in the graph and $\alpha()$ is the inverse Ackermann function~\cite{jacm/Tarjan75}. 
Chen et al.~\cite{swat/ChenDGWXY18} showed that there exist data structures for which a gap of $(\min\{n,\sqrt{T}\})^{1-o(1)}$ exists in the time per operation between partially and fully RDS, under some well-known conjectures.
However, these data structure are not graph data structures, but rather unusual data structures.

\paragraph{Our Results.}
We study the fully RDS for graph problems, providing for a variety of fundamental graph problems efficient incremental fully RDS and almost matching upper and lower bounds for their fully dynamic fully retroactive counterparts.
We start with some strong hardness results on the update and query time for fully RDS on several graph problems, assuming the online boolean matrix-vector multiplication (OMv) conjecture~\cite{stoc/HenzingerKNS15}.
Our hardness results show that for many of the problems we study in this paper, it is difficult to get RDS with truly sublinear time per operation.

\begin{theorem}\label{th:main_hardness}
	Assuming the OMv conjecture, there do not exist data structures for the following problems with $O(n^{1-\epsilon})$ update and query time subject to edge insertions/deletions:
	\begin{compactitem}
		\item fully retroactive connectivity, maximal matching, MSF, maximum density;
		\item incremental fully retroactive maximum degree.
	\end{compactitem}
\end{theorem}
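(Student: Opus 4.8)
The plan is to derive every bound from the OMv conjecture, used in the boolean matrix--vector form: after polynomial preprocessing of an $m\times m$ boolean matrix $M$, the vectors $v^{(1)},\dots,v^{(m)}\in\{0,1\}^m$ arrive online and one must output $Mv^{(k)}$ before seeing $v^{(k+1)}$; OMv asserts this needs total time $m^{3-o(1)}$. I will show that a data structure for any of the listed problems running in $O(n^{1-\epsilon})$ time per update and per query ($n=$ number of vertices) yields an $O(m^{3-\epsilon})$ algorithm for this task, a contradiction.

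The core idea is to spread $M$ along the timeline so that every version of the structure that we ever query is a \emph{constant-radius} graph, which sidesteps the usual obstruction (in undirected connectivity, $s$ and $t$ can be linked by a long alternating walk even when no single edge witnesses a matrix product). Take vertices $r_1,\dots,r_m$, $c_1,\dots,c_m$ and a hub $h$, so $n=\Theta(m)$. In preprocessing, for each $i$ create at time $i$ the star edges $\{r_i,c_j\}$ over all $j$ with $M_{ij}=1$, and create at time $i+\tfrac12$ the deletions of exactly those edges; this is $O(m^2)$ operations, and for every $t\in(i,i+\tfrac12)$ the graph at time $t$ is just the star at $r_i$. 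To handle round $k$: with $O(m)$ operations, cancel the previous round's hub edges and insert the edges $\{h,c_j\}$ for $j\in\operatorname{supp}(v^{(k)})$ at a time earlier than $1$ (so they are present in all the versions we probe); then, for $i=1,\dots,m$, query the structure at some $t\in(i,i+\tfrac12)$. At that version the graph is the $r_i$-star together with $h$ joined to the active columns, so a column $c_j$ is adjacent to both $r_i$ and $h$ iff $M_{ij}=1$ and $j\in\operatorname{supp}(v^{(k)})$; the query therefore recovers the bit $(Mv^{(k)})_i$. Altogether this is $O(m^2)$ updates and $O(m^2)$ queries, finishing in $O(m^{3-\epsilon})$ time, which refutes OMv. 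This argument genuinely needs full retroactivity: the $m$ probe-versions of a round coexist after the single preprocessing pass, whereas a partially retroactive (i.e.\ ordinary dynamic) structure would have to materialize each of the $m$ different probe-graphs separately, paying $\Omega(m)$ updates each and $\Omega(m^3)$ overall.

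Only the read-off at a probe time changes from problem to problem. For connectivity one asks ``$r_i\sim h$?''. For MSF one runs the same construction with the $\{h,c_j\}$ edges of weight $0$ and the star edges of weight $1$ and reads the spanning-forest weight, which drops below a fixed value exactly when $(Mv^{(k)})_i=1$. For maximum density one adds a baseline edge $\{r_i,h\}$ at each probe version so that a common column creates a triangle, and asks whether the densest-subgraph value reaches $1$. For maximal matching one replaces each $c_j$ by a small private gadget so that any maximal matching is forced to reveal whether some column is adjacent to both $r_i$ and $h$. The incremental fully retroactive maximum-degree bound is proved separately, since the construction above uses edge deletions; there one must keep the timeline layout monotone and isolate each row using only fresh vertices and cancellations of past insertions, after which a distinguished vertex's degree counts the witnessing pairs and is compared against a threshold.

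The step I expect to be hardest is exactly this per-query gadget design: guaranteeing that the \emph{global} quantity the data structure returns --- a component relation, an MSF weight, a densest-subgraph value, a maximal-matching size --- equals the single bit $(Mv^{(k)})_i$ and nothing weaker, while the vertex count stays $\Theta(m)$ and the operation count stays $O(m^2)$ so that the OMv bound is actually triggered. The maximal-matching case (maximal matchings are not unique, so the probe gadget must pin down the matching's behavior) and the deletion-free incremental maximum-degree case are where the construction will take the most care.
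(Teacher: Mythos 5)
Your skeleton is essentially the paper's: encode the rows of $M$ in disjoint time slots with $O(m^2)$ preprocessing operations, encode each arriving vector by $O(m)$ modifications of hub edges (cancelling the previous round's), and read off $Mv^{(k)}$ with one query per slot, so that an $O(n^{1-\epsilon})$-per-operation structure would solve OMv in $O(m^{3-\epsilon})$ time. The connectivity and maximum-density read-offs are correct. However, the MSF read-off as you state it is wrong: with weight-$0$ hub edges, weight-$1$ star edges and no $r_i$--$h$ edge, if \emph{exactly one} column $c_j$ is adjacent to both $r_i$ and $h$ the probe graph is still a forest, so the MSF is the whole graph and its weight is $|m_i|$ --- the same value as when there is no common column; the weight only drops once there are two witnessing columns. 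So the gadget fails on the single-witness case, which is exactly the case the reduction must detect. (The paper avoids this by adding a permanent edge between the two terminals, so that a single witness already closes a cycle, and by querying the number of forest edges.) The maximal-matching case is not actually proved: with a shared hub $h$ the matching can use $h$ at most once, and you only promise an unspecified ``private gadget'' per column. The paper's concrete construction uses $3n$ vertices $a_j,u_j,b_j$ with vertex-disjoint conflict pairs $(a_j,u_j)$, $(u_j,b_j)$, so that \emph{every} maximal matching has size $|m_i|+|v|$ if and only if there is no witness; some such disjointification is needed and is not supplied.

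The most serious gap is the incremental fully retroactive maximum-degree bullet, which your sketch does not establish. In the incremental model the row edges inserted at time $i$ can never be deleted, so at the probe time for a later row the column vertices' degrees accumulate contributions from all earlier rows; ``isolating each row with fresh vertices and cancellations'' does not resolve this, because cancelling and re-creating the matrix gadget between rounds costs $\Theta(m^2)$ operations per round, hence $\Theta(m^3)$ overall, which no longer yields a contradiction with OMv, while within a round all $m$ probe versions must coexist. Moreover the query returns the \emph{global} maximum degree, not the degree of a distinguished vertex, so a threshold test on one vertex is not what the data structure answers. The paper's key idea is degree padding: at time $t_i$ it also inserts edges $(u_j,b_{j+i-2})$ for every $m_{i-1,j}=0$ (and analogous padding on the vector side at $t_0$), with indices shifted cyclically to avoid parallel edges, so that at any time in $(t_i,t_{i+1}]$ every $u_j$ has degree exactly $i$ if $m_{ij}=1$ and $i-1$ otherwise; then the maximum degree at that time equals $i+k$ if and only if $(Mv^{(k)})_i=1$, using only insertions and $O(m)$ operations per round. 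Without this padding idea (or an equivalent monotone construction whose read-off really is the maximum degree), the incremental part of the theorem remains unproved.
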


Our hardness results hold even when the edges are unweighted.
For maintaining a maximal matching and spanning forest, we assume that queries are on the size of the matching and the forest, respectively.
In the full version~\cite{henzinger2019upper}, we show that the same hardness result holds for fully RDS supporting queries on the existence of perfect matching.
Moreover, some of our hardness results apply even to approximation algorithms.
For the graph problems we study in this paper (in which the ordering of updates is not important, such as connectivity, maximal matching, and MSF), the partially retroactive setting is the same as the standard dynamic setting and can, thus, be solved in polylogarithmic time.
Hence our hardness results imply a polynomial gap in the time per operation between the partially and fully RDS.
Our hard instances consist of a sequence of $T = \Theta(n^2)$ operations and queries.
Thus they also imply that under the OMv conjecture, getting an $O(T^{1/2-\epsilon})$ time per operation is impossible (for the aforementioned problems).
Under the combinatorial boolean matrix multiplication conjecture, we show that our hardness results hold even when all operations are given before any query is made (which we refer to as the \emph{offline} version of the problem), as long as the data structures are combinatorial.

\smallskip

We also provide RDS with almost tight time per operation.
We first consider the \emph{incremental} setting, in which a retroactive operation either creates an insertion, or cancels an existing insertion.
In other words, the creation of a deletion is not supported. 
We provide incremental fully RDS for maintaining connectivity and spanning forest (SF) with polylogarithmic update and query time.
Observe that the incremental partially retroactive setting is at least as hard as the (non-retroactive) fully dynamic setting, as the cancel operation in the retroactive setting serves the function of deletion in the dynamic setting.
Our data structure for maintaining connectivity and spanning forest supports only unweighted edge insertions and deletions.
However, we show that it can be extended to support weighted edge insertions and deletions, resulting in an $(1+\epsilon)$-approximation MSF with polylogarithmic update and query time.

\begin{theorem}\label{th:main_inc}
	There exist incremental fully RDS maintaining connectivity, spanning forest, and an $(1+\epsilon)$-approximation MSF with $\tilde{O}(1)$ amortized update time and $\tilde{O}(1)$ worst case query time.
\end{theorem}

Note that while the incremental connectivity problem is equivalent to the union-find problem in the dynamic setting, their retroactive versions are different, at least as defined by Demaine et al.~\cite{talg/DemaineIL07}.
In the retroactive setting, an insertion of an edge at time $t$ that connects two different connected components in the connectivity problem corresponds to a union operation between two equivalence classes in the union-find problem at time $t$. 
If we insert another edge connecting the same two components at time $t' > t$, then its corresponding operation in the union-find data structure of Demaine et al. is \emph{illegal}, as two equivalence classes can not be united twice (at time $t'$ and $t$).
In other words, the set of retroactive operations allowed for the two problems are different.
Consequently, the fully retroactive union-find data structure by Demaine et al.~\cite{talg/DemaineIL07} with $O(\log T)$ time per operation can not be used to achieve the above result.

\smallskip

We also present data structures maintaining MSF and the maximum degree that supports (creation of) insertions and deletions of weighted edges.
By Theorem~\ref{th:main_hardness}, our data structures have almost tight time per operation.

\begin{theorem}\label{th:main_fully}
	There exist fully RDS maintaining connectivity, MSF and maximum degree of an undirected graph with amortized $\tilde{O}(1)$ update time and worst case $O(n\log T)$ query time.
\end{theorem}

Our algorithmic results are obtained by maintaining a scapegoat tree~\cite{soda/GalperinR93} with $O(T)$ leaves, each of which is an interval defined by the times of two consecutive updates.\footnote{A similar data structure was mentioned in~\cite[Theorem 6]{talg/DemaineIL07}. However, they built a segment tree~\cite{bentley1977algorithms} on the leaves and some details on maintaining the tree were missing.}
Each internal node stores a set of edges, and maintains a data structure (depending on the problem) to support the queries.
The tree structure allows efficient retrieval of the edges that exist at time $t$ by examining $O(\log T)$ internal nodes.
Moreover, it can be shown that each edge is stored in $O(\log T)$ internal nodes.
Consequently, for problems that admit linear time algorithms, e.g., maximal matching, each query can be answered in worst case $O(m\log T)$ time, where $m$ is the maximum number of edges.
For maintaining connectivity, MSF and the maximum degree, we show that the query time can be improved to $O(n\log T)$, by maintaining a sparse data structure in each internal node of the scapegoat tree. 
%
A similar (yet different) data structure was used by Demaine et al.~\cite{wads/DemaineKLSY15} to maintain the set of retroactive operations sorted by time for their fully retroactive priority queue data structure.
In their \emph{checkpoint tree}, a scapegoat tree is maintained with the set of retroactive operations being the leaves.
Each internal node $u$ maintains a partially RDS induced by the operations (leaves) in the subtree rooted at $u$.
Consequently, if an element is stored at some node $u$, it is also stored at the parent of $u$.
In contrast, in our data structure, the set of elements stored at an internal node is disjoint from the set of the elements stored at its children.
Moreover, since we do not maintain partially RDS in internal nodes, we do not need to maintain explicitly the set of invalid operations, e.g., a deletion of an edge that is inserted by an operation in another subtree.
This property is crucial for efficient data structures on graph problems when edges are inserted and deleted multiple times.
We summarize our results in Table~\ref{table:results} as follows (where Retro. stands for Retroactive).

\vspace*{-10pt}

\begin{table*}[h]
	\footnotesize
	\begin{center}
		{
			\renewcommand{\arraystretch}{1.1}
			\begin{tabular}{ |c|c|c|c| }
				\hline
				& Incremental & Fully Retro. & Hardness \\ \hline
				Maximum Degree & $\mathbf{\tilde{O}(n)}$ & $\mathbf{\tilde{O}(n)}$ & $\mathbf{\Omega(n^{1-o(1)})}$ (Incremental) \\ \hline 
				Connectivity, SF & $\mathbf{\tilde{O}(1)}$ & $\mathbf{\tilde{O}(n)}$ & $\mathbf{\Omega(n^{1-o(1)})}$ (Fully Retro.) \\ \hline 
				MSF & $\tilde{O}(1)$, $(1+\epsilon)$-approx. & $\mathbf{\tilde{O}(n)}$ & $\mathbf{\Omega(n^{1-o(1)})}$ (Fully Retro.) \\ \hline
				Maximal Matching & $\tilde{O}(m)$ & $\tilde{O}(m)$ & $\Omega(n^{1-o(1)})$ (Fully Retro.)\\ \hline
			\end{tabular}
		}
	\end{center}
	\caption{Summary of results. The complexity in each cell is for the amortized time per operation. The results in bold are almost tight.}
	\label{table:results}
\end{table*}

\vspace*{-20pt}

As we will show in Section~\ref{sec:fully}, in the (classic) dynamic setting, there exists a simple data structure that maintains the maximum degree of an unweighted graph in worst case $O(1)$ time.
On the other hand, it is well-known that maintaining connectivity takes time $\Omega(\log n)$~\cite{sicomp/PatrascuD06}.
In other words, maintaining maximum degree is ``easier'' than maintaining connectivity in the dynamic setting.
However, Theorem~\ref{th:main_hardness} and Theorem~\ref{th:main_inc} imply that in the incremental fully retroactive setting this relationship is reversed: maintaining the maximum degree cannot be done in truly sublinear time under the OMv conjecture, while the connectivity problem can be solved in polylogarithmic time.
This interesting observation illustrates how different RDS can be, when compared to dynamic data structures.

\smallskip

Our study of RDS was motivated by an application in computational topology, specifically the problem of dynamically maintaining a Reeb graph~\cite{thesis/parsa2014algorithms}. However, for that problem a restricted version of the fully retroactive connectivity problem has to be solved. Specifically, no updates can be inserted or deleted in $S$, but \emph{the order of two neighboring updates can be reversed}.
We call such an operation a \emph{swap operation}.
Interestingly, under this restricted setting we can beat the lower bounds (Theorem~\ref{th:main_hardness}) for the general retroactive setting.
We give a $\tilde O(1)$ time data structure for this restricted version, leading to the first non-trivial dynamic Reeb graph algorithm.
Indeed, our approach can be extended to a general class of problems, for which the answer only depends on the currently existing ``elements'' and \emph{not} on the order of the updates.

\begin{theorem}\label{th:main-swap}
	Suppose for a dynamic version of a problem there exists a data structure with $T_u$ update time, $T_q$ query time, and space complexity $\mathcal{M}$.
	Then for any integer $1\leq \tau\leq T$ and any fixed $T$ updates $S$ (each of which is associated with a time), there exists a fully RDS for the problem supporting swap operations with $O(T_u)$ update time and $O( T_q + (\tau-1) \cdot T_u)$ query time.
	The data structure uses $O(T\cdot \mathcal{M}/\tau)$ space.
\end{theorem}


\paragraph{Other Related Work.}
Persistence~\cite{jcss/DriscollSST89,jal/FiatK03} is another concept of dynamic data structures that consider updates with times.
The data structures maintain (and support queries on) several versions of the data structure simultaneously.
Operations of a persistent data structure can be performed on any version of the data structure, which produces a new version.
A key difference between persistent data structure and retroactive ones is that a retroactive operation at time $t$ changes \emph{all} later versions of a RDS, while in a persistent one each version is considered an unchangeable archive.
Other efficient RDS, e.g., for dynamic point location and nearest neighbor search, can be found on~\cite{soda/Blelloch08,esa/DickersonEG10,isaac/GoodrichS11,thesis/parsa2014algorithms}.


\section{Preliminaries}\label{sec:prelim}

In a RDS, each update and query is associated with a time $t$, where $t$ is a real number. We use $\now = +\infty$ to denote the present time.
Each retroactive operation creates or cancels an update of the graph at time $t$, and each query at time $t$ reveals some property of the graph at time $t$.
Specifically, we use $\create(\text{update},t)$ to denote a retroactive operation that creates an update at time $t$ and $\cancel(t)$ to denote the retroactive operation that removes the update at time $t$.
In this paper, updates are edge insertions $\ins(e)$ and deletions $\del(e)$.
Moreover, we assume that all operations are \emph{legal}.
For example, $\create(\del(e),t)$ can only be issued when edge $e$ exists at time $t$ and is not deleted after time $t$; $\cancel(t)$ can only be issued when there is an update at time $t$.
We assume that the initial graph is empty, and all updates and queries take place at different times.

A fully RDS supports queries $\query(\text{parameters}, t)$ at any time $t$, where the set of parameters can be empty.
A query made at time $t$ should be answered on the version of the graph at time $t$, on which only updates up to time $t$ are applied.
For example, for the connectivity problem, $\query(u,v,t)$ answers whether $u$ and $v$ are connected by edges that exist at time $t$.

Throughout the whole paper, we use $n$ to denote the number of nodes (which is fixed).
We use $T$ to denote the current number of updates (which is dynamic), excluding the updates that are cancelled.
A RDS maintains a sequence of updates $S$ sorted in ascending order of time.
The size of $S$ is $T$, which increases by one after each $\create(\text{update},t)$, and decreases by one after each $\cancel(t)$.
The set $S$ defines $T+1$ versions of the graph, and a query can be made on any of them.
Note the difference between an operation and an update with the definition of $S$: $S$ is a set of updates that define the versions of the graph, while operations modify $S$.
Throughout this paper we assume that the word size of the RAM is $O(\log n)$, and $T$ is polynomial\footnote{Note that any data structure need to store the $|S| = T$ updates. Thus if $T$ is too large then the space complexity would be already unacceptable. Alternatively we can assume that the word size is $O(\log T)$ as the parameters in the operations might have size $\Theta(\log T)$.} in $n$.
Consequently, we have $O(\log T) = O(\log n)$ and we only need constant words to represent any time $t$.
We also assume that the weights of edges are polynomial in $n$.

\paragraph{Incremental Fully Retroactive.}
In the incremental case, the retroactive operation $\create(\del(e),t)$ does not exist, i.e., $S$ contains only insertions of edges (at different times).
Note that in the incremental case the $\cancel(t)$ operation can still be issued, which removes one update (insertion) from $S$.

\smallskip

As we will show later, for maintaining connectivity, the incremental case is substantially easier than the general case; while for maintaining the maximum degree, even the incremental case can be very difficult.
The following definition will be useful for our data structures.

\begin{definition}[Lifespan]
	For each edge $e$ inserted at time $t_a$ and whose earliest deletion after $t_a$ is at time $t_b$ (which is $\now$ if it is not deleted), let $L_e = (t_a,t_b]$ be the \emph{lifespan} of $e$.
\end{definition}

While an edge can be inserted and deleted multiple times, to ease our notation we regard $e$ as a new edge every time it is inserted.
By definition, the set of edges existing at time $t$ is given by $E_t = \{ e: t\in L_e \}$,.
A query made at time $t$ should be answered based on the graph $G_t := (V,E_t)$.

\section{Lower Bounds}\label{sec:lower_bounds}

We present the hardness result for maintaining fully retroactive connectivity based on the OMv conjecture in this section.
That is, we prove Theorem~\ref{th:main_hardness} for the fully retroactive connectivity problem.
The proofs of other hardness results are included in the full version of the paper~\cite{henzinger2019upper}.
We first show that for almost all graph problems, ``natural'' fully retroactive algorithms can not have update and query time $o(\log T)$.
Consider a simple fully RDS on a graph with $n=2$ nodes.
The data structure needs to support insertions and deletions of the edge between the two nodes, and queries of whether the edge exists at time $t$.
We show that the problem is at least as hard as searching a key among $T$ sorted elements.
Thus no comparison-based\footnote{Given a query at time $t$, a comparison-based algorithm compare $t$ with times of other updates to identify the one with time closest to $t$.} fully RDS has update and query time $o(\log T)$.

Let $k_1<k_2<\ldots<k_T$ be $T$ points in time.
For each $i=1,2,\ldots,T$, we insert an edge $e=(u,v)$ at time $t = k_i$ and delete the edge immediately.
In other words, the edge $e$ exists only at time $k_1,k_2,\ldots,k_T$.
Assume that you are given a query operation with time parameter $k$, to check whether $k$ is in $\{k_1,\ldots,k_T\}$, it suffices to query whether the edge exists at time $k$.
Given that any comparison-based search requires $\Omega(\log T)$ time to find an element, we have an $\Omega(\log T)$ lower bound on the query time, for comparison-based fully retroactive algorithms of a large class of dynamic graph problems (including maximum degree, connectivity, maximal matching, etc).
The following lemma justifies the $O(\log T)$ factor that appears in the time per operation of our data structures.

\begin{lemma}
	Any comparison-based fully retroactive algorithm has $\Omega(\log T)$ time per operation.
\end{lemma}

%

\paragraph{OMv Conjecture.}
In the Online Boolean Matrix-Vector Multiplication (OMv) problem, the algorithm is given an $n\times n$ boolean matrix $M$, while a sequence of $n$ length-$n$ boolean vectors $v_1,v_2,\ldots,v_n$ arrive online.
The algorithm needs to output the vector $Mv_i$ before seeing the next vector $v_{i+1}$.
The OMv conjecture~\cite{stoc/HenzingerKNS15} states that there does not exit algorithm with $O(n^{3-\epsilon})$ running time for this problem, for any constant $\epsilon > 0$.

\medskip

We give a reduction from the OMv problem to fully retroactive connectivity as follows.
The reductions to other graph problems are similar.
Given an instance of the OMv problem consisting of an $n\times n$ matrix $M$ and an online sequence of $n$-dimensional vectors $\{v_i\}_{i\in[n]}$, let $m_i$ be the $i$-th row of matrix $M$.
Let $|x|$ denote the number of non-zero entries in a vector $x$.
We construct a graph with $n+2$ nodes $a,b,u_1,\ldots,u_n$.
We describe and construct a sequence of retroactive operations from the OMv instance as follows.


Recall that we assume all operations have different time.
However, for convenience, we use the following description.
By saying that we construct a set of retroactive operations $S$ at time $t$, we fix an arbitrary order of the operations in $S$, and construct the operations one by one, at time $t, t+\epsilon,\ldots,t+(|S|-1)\epsilon$, where $\epsilon$ is arbitrarily small.

Fix any sequence $t_0 < t_1 < \ldots < t_n$ of $n+1$ points in time.
We first describe the gadgets we construct for the rows of matrix $M$.
At time $t_1$, we insert an edge between $u_j$ and $b$ for every $m_{1j} = 1$.
That is, we construct a retroactive operation $\create(\ins(u_j,b),t_1)$ for every $j\in[n]$ with $m_{1j} = 1$, resulting in $|m_1|$ retroactive operations at time (very close to) $t_1$.
Then for $i=2,\ldots,n$, at time $t_i$, we create $|m_{i-1}|+|m_i|$ retroactive operations at time $t_i$ as follows.
We delete all edges incident to $b$ (by operations $\create(\del(u_j,b),t_i)$ for all $j\in[n]$ with $m_{i-1,j}=1$), and create insertions of edges $(u_j,b)$ for every $j\in[n]$ with $m_{ij} = 1$ (by operations $\create(\ins(u_j,b),t_i)$ for all $j\in[n]$ with $m_{i j}=1$).
Our construction of the graph and retroactive operations guarantee that at time $t\in (t_i,t_{i+1}]$, $b$ is connected to $u_j$ if and only if $m_{ij} = 1$.
Next we describe the gadgets for the vectors $v_1,v_2,\ldots,v_n$.

At time $t_0$, we create an insertion of edge $(a,u_j)$ for every $j\in[n]$ with $v_{1j} = 1$.
Observe that $\query(a,b,t) = 1$ for $t\in ( t_i,t_{i+1} ]$ if and only if there exist some $u_j$ that is connected to both $a$ and $b$ at time $t$.
By the above construction, that implies $m_i\cdot v_1 = 1$.
Hence $n$ connectivity queries, namely at $t_1,t_2,\ldots,t_n$, between $a$ and $b$ suffice to compute $Mv_1$.
Given $v_2$, we modify the edges incident to $a$ as follows.
At time $t_0$, we delete all edges incident to $a$, and insert edge $(a,u_j)$ for every $j\in[n]$ with $v_{2j} = 1$ (with $O(n)$ retroactive operations).

In other words, we change the edges between $a$ and $\{u_j\}_{j\in[n]}$ at time $t_0$ based on $v_{2}$.
Then we can compute $M v_{2}$ by another $n$ connectivity queries as discussed above.
By repeating the above procedure for all vectors $v_i$, we can solve the OMv problem with $O(n^2)$ retroactive operations and queries, on a data structure with $O(n)$ nodes.
Hence if there exists a fully RDS for the connectivity problem with $O(n^{1-\epsilon})$ update and query time, then the OMv problem can be solved in $O(n^{3-\epsilon})$ time, violating the OMv conjecture.

\section{Incremental Fully Retroactive Connectivity and SF}\label{sec:incremental}

In this section we propose an incremental fully RDS for connectivity and spanning forest with polylogarithmic update and query time.
Recall that the edges are unweighted.
We first present the data structure to support connectivity queries.

Formally, an incremental fully retroactive connectivity data structure supports the following retroactive operations:
\begin{compactitem}
	\item $\create(\ins(e),t)$: insert an edge $e$ into the graph at time $t$;
	\item $\cancel(t)$: cancel the insertion of edge at time $t$; and
	\item $\query(u,v,t)$: return whether $u$ and $v$ are connected at time $t$.
\end{compactitem}

\begin{theorem}\label{th:incremental_conn}
	There exists an incremental fully retroactive connectivity data structure with amortized $O(\frac{\log^4 n}{\log\log n})$ update time that answers each query with worst case $O(\log n)$ time.
\end{theorem}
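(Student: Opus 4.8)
The plan is to maintain the update sequence $S$ (which, in the incremental setting, consists solely of edge insertions at distinct times) and build a balanced search tree over the $T$ insertion times. Since deletions never enter $S$, the only way an edge's presence ends is by a $\cancel$ operation removing its insertion; so at any query time $t$, the edge set $E_t$ is exactly the set of edges whose insertion time is $\le t$. This monotone structure is what makes the incremental case tractable: connectivity at time $t$ only depends on the prefix of insertions up to $t$, and inserting an edge can only merge components — it never splits them. First I would set up a data structure that, for the query $\query(u,v,t)$, returns the first (smallest) time $\tau$ at which $u$ and $v$ became connected; then $u,v$ are connected at time $t$ iff $\tau \le t$. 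Answering such a query should take $O(\log n)$ time, which matches the claimed worst-case query bound.

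**The key idea: reduction to offline/incremental connectivity along a tree of prefixes.**

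The natural approach is a weighted (or "timestamped") union-find: process the insertions in time order, and when an edge $(u,v)$ inserted at time $t$ merges two components, label the corresponding union with the timestamp $t$. Then the earliest connection time of $u$ and $v$ is the maximum timestamp along the path between $u$ and $v$ in the resulting "union tree" — a minimax/bottleneck query, answerable in $O(\log n)$ time with a balanced representation (e.g., via a link-cut tree or a Kruskal-style minimax tree). The obstacle is that $\cancel(t)$ can remove an insertion in the \emph{middle} of the history, which can change which subsequent edges are the "merging" ones and hence invalidate many timestamps at once. To handle this I would not recompute from scratch; instead I would maintain the structure using the standard trick of a balanced binary tree (scapegoat tree, to keep rebalancing cheap and consistent with the later sections) whose leaves are the $T$ insertion times in sorted order. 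Each internal node $u$ stores a compressed representation of "the effect of the insertions in its subtree": specifically, the partition of the vertex set that these insertions induce, represented sparsely (only the $O(n)$ merges matter, and in fact only vertices actually touched), together with the minimax-tree structure on those merges. Merging the summaries of two children is essentially one pass of incremental connectivity / union-find on $O(n)$ elements, costing $\tilde{O}(n)$ — but amortized over the rebuilds this must be pushed down.

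**Getting the polylogarithmic amortized bound.**

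To reach amortized $O(\log^4 n / \log\log n)$ rather than $\tilde{O}(n)$ per operation, the crucial observation is that each insertion edge participates as a "merging" edge at most once per level of the tree, and the number of distinct vertices appearing as endpoints of merging edges in a subtree with $k$ leaves is $O(\min(k,n))$; so the total work to rebuild a subtree of size $k$ is $O(\min(k,n)\log n)$ rather than $O(k\log n)$. Combined with the scapegoat tree's amortized $O(\log T)$ rebuild-size guarantee per insertion and the fact that $O(\log T) = O(\log n)$, one gets that each $\create$ or $\cancel$ touches $O(\log n)$ ancestor nodes, and the cost of re-merging summaries telescopes. The extra $\log n / \log\log n$ factor (and one more $\log n$) comes from the data structure used \emph{inside} each node to support the bottleneck queries and to merge two partition-summaries: a union-find with union by rank and a tree of degree $\Theta(\log^{\epsilon} n)$, or equivalently an $O(\log n/\log\log n)$-depth search tree over timestamps, gives the stated exponent. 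The main obstacle — and the part needing the most care — is precisely this amortized accounting: showing that although a single $\cancel$ in the middle of history can in principle cascade, the scapegoat invariant plus the "$\le \min(k,n)$ relevant vertices per subtree" bound confines the total rebuild cost to $\tilde{O}(1)$ amortized. I would prove this by a potential-function argument on the scapegoat tree (standard), upgraded with the observation that node summaries have size $O(\min(k,n))$, so rebuild cost is $O(\min(k,n) \cdot \mathrm{polylog}\, n)$ and the usual scapegoat analysis goes through with the claimed bounds. For the spanning-forest variant, each node additionally stores the actual tree edges realizing its minimax tree, so a $\query$ can report a spanning forest of $G_t$ by assembling $O(\log n)$ node summaries; this changes query time to $\tilde O(n)$ but keeps the update bound, matching the table.
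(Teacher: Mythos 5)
Your opening reduction---answering $\query(u,v,t)$ by computing the earliest time $u$ and $v$ become connected, which is the maximum insertion timestamp on the $u$--$v$ path in a minimum spanning forest of the graph whose edge weights are the insertion times---is exactly the paper's key observation. But you then miss the step that makes the theorem short: once you regard the update set $S$ as a weighted graph $H$ (weight of an edge $=$ its insertion time), a $\create(\ins(e),t)$ is simply an insertion of a weighted edge into $H$ and a $\cancel(t)$ is simply a deletion of a weighted edge from $H$, so the entire problem \emph{is} fully dynamic MSF on $H$. The paper runs the fully dynamic MSF algorithm of Holm, Rotenberg and Wulff-Nilsen as a black box (amortized $O(\frac{\log^4 n}{\log\log n})$ per edge update---this is where that exponent comes from), keeps the forest in a link-cut tree, and answers each query by one path-maximum query in worst case $O(\log n)$; correctness is the standard cycle property of MSF. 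No time-indexed search tree is needed in the incremental case at all.

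Your substitute construction---a scapegoat tree over insertion times whose internal nodes store partition summaries and minimax trees---has concrete gaps. First, query time: answering at time $t$ requires combining the summaries of the $O(\log T)$ canonical subtrees covering the prefix of insertions up to $t$; merging partitions and bottleneck structures of total size up to $\Theta(n\log n)$ at query time is not an $O(\log n)$ worst-case operation, and you never explain how a single bottleneck query spanning several node summaries is answered within that bound. Second, update time under $\cancel$: cancelling an insertion in the middle of history removes one edge from the edge set of \emph{every} ancestor of that leaf, and deciding whether an ancestor's partition splits is a decremental connectivity problem inside a summary of size up to $\Theta(n)$; the scapegoat potential only pays for rebuilds triggered by shape imbalance, a cancel deposits no credit, so the claim that ``the usual scapegoat analysis goes through'' does not cover this cost. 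Third, the stated bound $O(\frac{\log^4 n}{\log\log n})$ is asserted via a vague appeal to bounded-degree union-find rather than derived; in the paper it is inherited directly from the dynamic MSF update time. In effect your scheme re-creates the need for fully dynamic connectivity inside each internal node without providing a polylogarithmic solution---which is precisely the difficulty the black-box dynamic MSF on the time-weighted graph eliminates.
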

\begin{proof}
	Recall that the set $S$ (of updates) contains only insertions (each of them corresponds to a unique edge), while $\create()$ and $\cancel()$ modify $S$.
	Thus we can regard $S$ as a dynamic set of edges, where each edge has weight equal to the time it is inserted.
	The set $S$ defines an edge-weighted graph $H$, and the graph at time $t$ is the subgraph induced by edges with weight at most $t$.
	It suffices to maintain a dynamic MSF on the graph $H$: each $\create()$ inserts a weighted edge to $H$ and each $\cancel()$ deletes one from $H$. 
	
	We maintain a MSF on $H$ using the algorithm by Holm et al.~\cite{esa/HolmRW15}, and store the resulting MSF in a link-cut tree~\cite{jcss/SleatorT83}.
	Given the MSF, we can answer $\query(u,v,t)$ by looking at the edge with maximum weight $t'$ on the path between $u$ and $v$ in the MSF, and answer ``yes'' iff $t' < t$, which can be done in $O(\log n)$ time.
	It is not difficult to show the correctness of the query.
	Suppose there exists a path connecting $u$ and $v$ using edges of weight at most $t$ in $H$, then in the MSF, the maximum weight of an edge on the path between $u$ and $v$ must be at most $t$.
	Because otherwise we can remove that edge and include an edge with weight at most $t$, which violates the definition of MSF.
	
	Obviously, every retroactive operation and query can be handled by a single update on the MSF, which can be done in amortized $O(\frac{\log^4 n}{\log\log n})$ time.	
\end{proof}

Next we describe the data structure and algorithm to maintain an incremental fully retroactive SF.
To distinguish the SF from the MSF of $H$, we use $\text{MSF}_H$ to denote the weighted spanning forest of $H$ that we maintain.
We use the same data structure (with minor changes) to support the following queries:
\begin{compactitem}
	\item $\query(t)$: return a SF at time $t$;
	\item $\query(\text{size},t)$: return the size (number of edges) of a SF at time $t$.
\end{compactitem} 

Again, we maintain $\text{MSF}_H$ on $H$: $\query(t)$ can be trivially answered in $O(n)$ time by outputting all edges in the MSF$_H$ with weight less than $t$.
To support $\query(\text{size},t)$, we need to count the number of edges with weight less than $t$ in $\text{MSF}_H$.
We maintain an AVL tree that supports range query\footnote{Please refer to \url{https://www.geeksforgeeks.org/count-greater-nodes-in-avl-tree/} for an implementation.} on the weights of the edges of $\text{MSF}_H$.
Since every retroactive operation changes $\text{MSF}_H$ by at most one edge, the AVL tree can be maintained in $O(\log n)$ time per operation.
We can answer $\query(\text{size},t)$ by querying the number of elements with value less than $t$ in the AVL tree.
In summary, we have the following.

\begin{theorem}\label{th:incremental_msf}
	There exists an incremental fully retroactive SF with amortized $O(\frac{\log^4 n}{\log\log n})$ update time that supports $\query(t)$ in worst case $O(n)$ time and $\query(\text{size},t)$ in worst case $O(\log n)$ time.
\end{theorem}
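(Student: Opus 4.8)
The plan is to mirror the proof of Theorem~\ref{th:incremental_conn} almost verbatim, reusing the weighted graph $H$ and the maintained weighted spanning forest $\text{MSF}_H$. Recall that in the incremental setting $S$ is a dynamic set of edge insertions, which we view as a weighted graph $H$ where each edge carries as its weight the time it is inserted. A $\create(\ins(e),t)$ becomes an edge insertion into $H$ and a $\cancel(t)$ becomes an edge deletion from $H$; by the algorithm of Holm et al.~\cite{esa/HolmRW15} the (weighted) spanning forest $\text{MSF}_H$ is maintained in amortized $O(\frac{\log^4 n}{\log\log n})$ time per operation, and each operation changes $\text{MSF}_H$ by at most one edge. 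The key observation for the query is that the graph $G_t$ at time $t$ is exactly the subgraph of $H$ induced by edges of weight less than $t$, and that restricting $\text{MSF}_H$ to edges of weight $<t$ yields a spanning forest of $G_t$: this is the same exchange argument already used in Theorem~\ref{th:incremental_conn} — if some tree path in $\text{MSF}_H$ between two vertices of the same component of $G_t$ used an edge of weight $\ge t$, we could swap it for a lighter edge on the corresponding cut, contradicting minimality. Hence $\query(t)$ is answered by listing the edges of $\text{MSF}_H$ of weight $<t$, which takes $O(n)$ time since $\text{MSF}_H$ has at most $n-1$ edges.

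For $\query(\text{size},t)$ I would maintain, alongside $\text{MSF}_H$, a balanced binary search tree (an AVL tree supporting order-statistics / range-count queries) keyed by the edge weights of $\text{MSF}_H$. Each retroactive operation changes $\text{MSF}_H$ by at most one edge, so at most one insertion and one deletion into this AVL tree are needed, costing $O(\log n)$ per operation, which is dominated by the $O(\frac{\log^4 n}{\log\log n})$ update time already incurred. A $\query(\text{size},t)$ then reduces to counting the number of keys strictly less than $t$ in the AVL tree, which is a standard range-count operation in worst case $O(\log n)$ time. Correctness of this count follows immediately from the structural claim above: the size of a spanning forest of $G_t$ equals the number of edges of $\text{MSF}_H$ of weight $<t$.

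Putting the pieces together: update time is amortized $O(\frac{\log^4 n}{\log\log n})$ (the bottleneck being the dynamic MSF maintenance of Holm et al.\ and the $O(\log n)$ AVL update), $\query(t)$ is worst case $O(n)$, and $\query(\text{size},t)$ is worst case $O(\log n)$, which is exactly the claim of the theorem. I do not anticipate a genuine obstacle here — the only point requiring a line of care is the exchange argument showing that the weight-thresholded restriction of $\text{MSF}_H$ is still a spanning forest of $G_t$, but this is identical to the argument already carried out in the proof of Theorem~\ref{th:incremental_conn} and reuses the fact that all insertion times are distinct (so edge weights in $H$ are distinct and the MSF is well-defined). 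One minor bookkeeping remark worth stating in the write-up: since $\text{MSF}_H$ is stored in a link-cut tree for the connectivity queries, maintaining the separate AVL tree on its edge weights is consistent — every structural change to $\text{MSF}_H$ reported by the dynamic MSF algorithm is forwarded both to the link-cut tree and to the AVL tree.
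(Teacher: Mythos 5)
Your proposal matches the paper's proof essentially verbatim: maintain the weighted spanning forest $\text{MSF}_H$ of $H$ (edge weights equal to insertion times) via the dynamic MSF algorithm of Holm et al., answer $\query(t)$ by listing the edges of weight less than $t$, and answer $\query(\text{size},t)$ with an AVL tree supporting range counts on the edge weights of $\text{MSF}_H$, updated in $O(\log n)$ per operation since each operation changes $\text{MSF}_H$ by at most one edge. The correctness argument via the exchange property is the same one the paper invokes in the proof of Theorem~\ref{th:incremental_conn}, so your write-up is correct and follows the same route.
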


While our data structure supports only unweighted edge insertions and deletions, we show that it can be extended to the weighted case to maintain an $(1+\epsilon)$-approximate MSF.
Using the techniques from Henzinger and King~\cite{sicomp/HenzingerK01}, we maintain an $(1+\epsilon)$-approximate MSF by partitioning the edges into weight classes.
Basically, we round the edge weights up to powers of $1+\epsilon$, and maintain $O(\frac{1}{\epsilon}\log W)$ incremental fully RDS we described above, one for each weight class.
Here we assume all edge weights are in $[1,W]$.
Each insertion of a weighted edge translates into an insertion of an unweighted edge in the corresponding weight class.
Queries for the approximation MSF made at time $t$ can be answered by collecting $O(\frac{1}{\epsilon}\log W)$ spanning forests (one from each data structure), and performing a static MSF algorithm, which takes time $O(\frac{n}{\epsilon}\log W)$.

In order to answer the total weight of the MSF more efficiently, we modify the data structure as follows.
Each insertion of an edge of weight $(1+\epsilon)^i$ is translated to an insertion of an unweighted edge in each of the weight classes $j = i, i+1,\ldots,l$, where $l=\log_{1+\epsilon}W$.
In other words, weight class $j$ contains all edges of weight \emph{at most} $(1+\epsilon)^j$.
Then the query of the total weight at time $t$ can be answered by $O(\frac{1}{\epsilon}\log W)$ queries $\query(\text{size},t)$ as follows.
Let $a_i$ be the size returned by $\query(\text{size},t)$ at weight class $i$, where $i=0,1,\ldots,l$.
Then $a_0 + \sum_{i=1}^{l}  (a_i - a_{i-1})\cdot(1+\epsilon)^i$ is the total weight of an $(1+\epsilon)$-approximation MSF.
Note that the query for the approximation MSF can still be answered by collecting $O(\frac{1}{\epsilon}\log W)$ spanning forests and performing a static MSF algorithm in $O(\frac{n}{\epsilon}\log W)$ time.
In summary, the amortized update time is $O(\frac{\log^4 n}{\log\log n}\cdot \frac{1}{\epsilon}\log W)$, and the worst case query time is $O(\frac{n}{\epsilon}\log W)$ for the approximation MSF, $O(\log n\cdot \frac{1}{\epsilon}\log W)$ for its total weight.

\section{Fully Retroactive Data Structures}\label{sec:fully}

In this section we present fully RDS for maintaining the maximum degree, connectivity and MSF.
Recall that for maintaining the maximum degree and MSF, edges are weighted.
Combined with the hardness results, the data structures we propose in this section achieve almost optimal (up to a polylogarithmic factor) time per operation. 
We first introduce a general framework for the fully RDS.


We present a dynamic balanced binary tree $\mathcal{T}$ that maintains the set of edges subject to insertions and deletions at different times.
The balanced binary tree serves as the framework for several RDS we will introduce later.
Depending on the problem, we maintain different (non-retroactive) dynamic data structures in the internal nodes.
We implement the balanced binary tree using the scapegoat tree~\cite{soda/GalperinR93}, which rarely rebuilds part of the tree to maintain balance.\footnote{Other balanced search trees, e.g., AVL tree~\cite{aw/Sedgewick83}, maintain balance by rotating part of the tree, which will be expensive when we maintain a data structure in each internal node $u$ depending on the set of leaves in $\mathcal{T}(u)$.}

We show that the balanced binary tree $\mathcal{T}$ enables us to handle each retroactive operation by updating $O(\log T)$ internal nodes if no rebuild occurs.
We rebuild the tree when it is not balanced and charge the cost of rebuild to the retroactive operations that are responsible for the imbalance, such that each operation is charged by $O(\log^2 T)$ updates of internal nodes.

Consider a sequence $S$ of $T$ updates and each update is associated with a time $t$.
We order the updates in $S$ in ascending order of their time, and we use $t_1<t_2<\ldots<t_T$ to denote these times.
For completeness, let $t_0 = -\infty$ and $t_{T+1} = \now$.
The scapegoat tree $\mathcal{T}$ we maintain has $T$ leaf nodes $( t_{i},t_{i+1} ]$ for $i = 1,2,\ldots,T$.
%
For any node $u$, let $\mathcal{T}(u)$ denote the subtree rooted at $u$ in $\mathcal{T}$.
The scapegoat tree maintains the following invariant:

\begin{invariant}\label{inv:balanced}
	For each internal node $u$ and its sibling $v$, $|\mathcal{T}(u)|\leq 2\cdot |\mathcal{T}(v)|$.
\end{invariant}

Whenever an internal node violates the invariant, the algorithm determines the internal node closest to the root that violates the invariant and rebuilds its subtree from scratch, fulfilling the invariant. The amortized cost of this rebuild is $O(\log T)$ per operation in $\mathcal{T}$.

A standard argument for balanced search tree implies that if the invariant is maintained, then the height of the tree is upper bounded by $O(\log T)$.
We maintain the following data structures for each node $u$ of the scapegoat tree:
\begin{itemize}
	\item an interval $I_u$, which is the union of the intervals of the leaves of $\mathcal{T}(u)$.
	\item a data structure $\mathcal{D}(u)$ that stores the edges $e$ such that (1) $I_u\subseteq L_e$; and (2) $I_w\nsubseteq L_e$, where $w$ is the parent of $u$ in $\mathcal{T}$.
	(Recall that $L_e$ is the lifespan of edge $e$.)
	If $u$ is the root of the tree then we only require that $I_u\subseteq L_e$.
	For convenience we also interpret $\mathcal{D}(u)$ as a set of edges.
	The exact choice of $\mathcal{D}(u)$ depends on the graph property that is maintained.
\end{itemize}

In other words, each internal node $u$ maintains an interval $I_u$ the subtree $\mathcal{T}(u)$ covers, and stores edge $e$ if the interval of $u$ is the maximal interval contained in $L_e$.
The above data structure enables efficient retrieval of $E_t$, i.e., the set of edges existing at time $t$.

\begin{lemma}\label{lemma:bst_coverage}
	Fix any time $t\in (t_i, t_{i+1}]$.
	Let $(v_l,v_{l-1},\ldots,v_0)$ be the path from the leaf node $v_l = ( t_i,t_{i+1} ]$ to the root $v_0$.
	We have $E_t = \bigcup_{i=0}^l \mathcal{D}(v_i)$ and $\mathcal{D}(v_i)\cap \mathcal{D}(v_j)=\emptyset$ for all $i\neq j$.
\end{lemma}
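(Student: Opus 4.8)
The plan is to prove the two claims separately, both by tracking, for a fixed edge $e$ and a fixed time $t \in (t_i, t_{i+1}]$, exactly which nodes on the root-to-leaf path $(v_0, v_1, \dots, v_l)$ store $e$ in their data structure. The key structural observation is that the intervals $I_{v_0} \supseteq I_{v_1} \supseteq \dots \supseteq I_{v_l}$ form a nested chain (each node's interval contains its children's intervals, by definition of $I_u$ as the union of leaf intervals in $\mathcal{T}(u)$), and that the leaf interval $I_{v_l} = (t_i, t_{i+1}]$ contains the point $t$.

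For the disjointness claim $\mathcal{D}(v_a) \cap \mathcal{D}(v_b) = \emptyset$ for $a \neq b$: this is actually not restricted to the path and follows from the global structure. First I would argue that for \emph{any} two nodes $u, u'$ in $\mathcal{T}$ with neither an ancestor of the other, we have $I_u \cap I_{u'} = \emptyset$ (their leaf sets are disjoint), so no edge $e$ can have $I_u \subseteq L_e$ and $I_{u'} \subseteq L_e$ simultaneously while... actually more simply, if $e \in \mathcal{D}(u)$ and $e \in \mathcal{D}(u')$ with $u$ a strict ancestor of $u'$, then $I_{u'} \subseteq L_e$ but also the parent $w'$ of $u'$ satisfies $I_{w'} \nsubseteq L_e$; since $u$ is an ancestor of $u'$, it is an ancestor of or equal to $w'$, so $I_{w'} \subseteq I_u \subseteq L_e$ — contradiction unless $u = u'$. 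And if $u, u'$ are incomparable, $I_u \cap I_{u'} = \emptyset$, so an edge cannot have its lifespan contain both. In either case $\mathcal{D}(u) \cap \mathcal{D}(u') = \emptyset$. Since the $v_a$ are pairwise comparable (they lie on one path), this covers the second statement.

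For $E_t = \bigcup_{a=0}^{l} \mathcal{D}(v_a)$ I would show both inclusions. For $\supseteq$: if $e \in \mathcal{D}(v_a)$ for some $a$, then $I_{v_a} \subseteq L_e$, and since $t \in I_{v_l} \subseteq I_{v_a}$ we get $t \in L_e$, i.e. $e \in E_t$. For $\subseteq$: fix $e \in E_t$, so $t \in L_e$. Consider the nodes on the path whose interval is contained in $L_e$. The leaf $v_l$ has $I_{v_l} = (t_i, t_{i+1}]$; I need this to be contained in $L_e = (t_a', t_b']$ where $t_a', t_b'$ are consecutive update times in the bracketing sense — here I use that all updates (in particular the insertion and deletion defining $L_e$) occur at distinct times among $t_1, \dots, t_T$, so $L_e$ is a union of leaf intervals, hence $t \in L_e$ with $t \in (t_i, t_{i+1}]$ forces $(t_i, t_{i+1}] \subseteq L_e$. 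Thus $v_l$ itself has its interval inside $L_e$. Now walk up from $v_l$: let $v_a$ be the highest node on the path with $I_{v_a} \subseteq L_e$ (this set is nonempty and, since $I_{v_0} \supseteq \dots \supseteq I_{v_l}$ is nested and "$\subseteq L_e$" is downward-closed along the path... wait, it is upward-obstructed), so it is a prefix-from-the-bottom; let $v_a$ be the topmost such node. If $v_a$ is the root, then $e \in \mathcal{D}(v_a)$ by the root clause. Otherwise its parent $w = v_{a-1}$ satisfies $I_w \nsubseteq L_e$ (by maximality of $v_a$), so $e \in \mathcal{D}(v_a)$ by the two defining conditions. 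Either way $e \in \bigcup_a \mathcal{D}(v_a)$.

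The main obstacle, and the step deserving the most care, is the claim that $L_e$ is exactly a union of consecutive leaf intervals — equivalently that $t \in L_e \cap (t_i, t_{i+1}]$ implies $(t_i, t_{i+1}] \subseteq L_e$. This is where the preliminaries' assumption that "all updates and queries take place at different times" and that $L_e = (t_{\text{ins}}, t_{\text{del}}]$ with $t_{\text{ins}}, t_{\text{del}} \in \{t_1, \dots, t_T\}$ is essential: it guarantees the endpoints of $L_e$ coincide with grid points $t_j$, so no leaf interval straddles an endpoint of $L_e$. Once that is nailed down, the rest is the routine nested-interval bookkeeping sketched above.
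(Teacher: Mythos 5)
Your proof is correct and follows essentially the same route as the paper's: use the nesting of the intervals along the root-to-leaf path, observe that $t\in L_e$ forces the whole leaf interval $(t_i,t_{i+1}]$ into $L_e$ (since the endpoints of $L_e$ are update times), and identify the unique topmost path node whose interval lies in $L_e$ as the one storing $e$. Your explicit justification of the step "$t\in L_e$ implies $(t_i,t_{i+1}]\subseteq L_e$" and your slightly more general disjointness argument are welcome but do not change the substance.
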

\begin{proof}
	First, for every $e\in E_t$ that exists at time $t$, we have $t\in L_e$, which implies that $v_l = ( t_i,t_{i+1} ]\subseteq L_e$.
	Thus $e$ must be contained in some unique $\mathcal{D}(v_i)$.
	That is, $E_t\subseteq  \bigcup_{i=0}^l \mathcal{D}(v_i)$.
	Specifically, $e$ is contained in $\mathcal{D}(v_i)$ such that $I_{v_i}\subseteq L_e$ while $I_{v_{i-1}}\nsubseteq L_e$.
	Therefore the sets of edges $\mathcal{D}(v_0),\mathcal{D}(v_1),\ldots,\mathcal{D}(v_l)$ are disjoint.
	On the other hand, for any $e\in \mathcal{D}(v_i)$, we have $I_{v_i}\subseteq L_e$, which implies $t\in (t_i, t_{i+1}] \subseteq L_e$ and hence $e\in E_t$.
\end{proof}

Lemma~\ref{lemma:bst_coverage} implies that with the tree $\mathcal{T}$, we can retrieve the edges $E_t$ by looking at $O(\log T)$ internal nodes.
In particular, $\query(t)$ can be handled by data structures maintained by $O(\log T)$ nodes.
For problems that admit linear time algorithms, e.g., connectivity and maximal matching, $\query(t)$ can be handled in $O(\log T + |E_t|)$ time, by maintaining the set of edges $\mathcal{D}(u)$ in each internal node $u$.
Next we show that the data structure maintains $O(\log T)$ copies of every edge.
Consequently, the total size of the sets $\mathcal{D}(u)$ is bounded by $O(T\log T)$.

\begin{lemma}\label{lemma:log_T_duplicates}
	Each edge is contained in $O(\log T)$ internal nodes.
	Moreover, these internal nodes can be found in $O(\log T)$ time.
\end{lemma}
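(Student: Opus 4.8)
The plan is to fix an edge $e$ with lifespan $L_e = (t_a, t_b]$ and count the internal nodes $u$ for which $e \in \mathcal{D}(u)$, i.e. the nodes with $I_u \subseteq L_e$ but $I_w \nsubseteq L_e$ for the parent $w$ of $u$. The key observation is that the set of \emph{maximal} subtrees whose intervals are contained in $L_e$ decomposes the ``canonical'' set of nodes covering the contiguous range of leaves inside $L_e$ — exactly the standard segment-decomposition phenomenon for balanced binary search trees. First I would identify the contiguous block of leaves $(t_i, t_{i+1}], \ldots, (t_j, t_{j+1}]$ whose union is $L_e$ (this block is contiguous and nonempty because the $t_k$'s are sorted and $L_e$ is an interval with endpoints among the update times, by the definition of lifespan together with the convention $t_0 = -\infty$, $t_{T+1} = \now$). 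A node $u$ stores $e$ precisely when $I_u$ is a maximal interval among subtree-intervals contained in $L_e$; equivalently, $u$ is a \emph{canonical node} of the leaf range $[i, j]$ in the sense used for segment trees.

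The main step is the counting bound: in a balanced binary tree of height $h$, the canonical decomposition of any contiguous leaf range consists of at most $O(h)$ nodes, at most two per level. I would argue this level by level: on each level of $\mathcal{T}$, consider the nodes whose intervals are contained in $L_e$ but whose parents' intervals are not. Such a node must sit at the ``left frontier'' or ``right frontier'' of the range — its interval contains leaf $i$'s left endpoint region or leaf $j$'s right endpoint region as a boundary — because an interior node fully inside $L_e$ has its parent either also fully inside $L_e$ (hence the parent, not $u$, is the storing node) or straddling a boundary, and a straddling parent can have at most one child that is a boundary node on that side. This gives at most two storing nodes per level, and since Invariant~\ref{inv:balanced} guarantees height $O(\log T)$ (as noted right after the invariant), the total is $O(\log T)$. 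Since $T$ is polynomial in $n$ we have $O(\log T) = O(\log n)$, consistent with the usage elsewhere.

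For the algorithmic claim — that these nodes are found in $O(\log T)$ time — I would describe walking down $\mathcal{T}$ from the root along (at most) two root-to-leaf paths: the search path to the leaf $(t_a, t_a']$ containing the left endpoint of $L_e$ and the search path to the leaf containing the right endpoint $t_b$. Whenever the two paths have diverged, every node hanging off the left path to the right, and every node hanging off the right path to the left, whose interval lies within $L_e$, is a storing node; these are collected on the way down. This visits $O(\log T)$ nodes total and each step does $O(1)$ interval comparisons (times are $O(1)$ words by the preliminaries), so the time is $O(\log T)$. The interval $I_u$ stored at each node makes the containment tests immediate, and I would note that the endpoints of $L_e$ are available from the edge's insertion and earliest-deletion records.

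The step I expect to be the main obstacle is the ``at most two nodes per level'' argument: making precise why a node $u$ with $I_u \subseteq L_e$ and $I_w \nsubseteq L_e$ must be adjacent to a boundary of the leaf block, and ruling out a third such node on a level. The cleanest route is the segment-tree-style induction on depth — assert that the storing nodes on each level form a set whose union of intervals is exactly $L_e$, and that on each level there is at most one node whose interval has a ``broken'' left neighbor and at most one with a ``broken'' right neighbor — rather than reasoning about individual parent-child straddles, which gets fiddly. I would keep the proof short by invoking the standard fact that a dyadic-style decomposition of a range in a balanced tree has $O(\log T)$ pieces, and only spell out enough to confirm our specific storing rule $(I_u \subseteq L_e$, $I_w \nsubseteq L_e)$ coincides with that decomposition.
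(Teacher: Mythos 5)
Your proposal is correct and follows essentially the same route as the paper: the paper argues that the parent $w$ of any storing node must be an ancestor of one of the two boundary leaves $(t_{a-1},t_a]$ or $(t_b,t_{b+1}]$, so all storing nodes are children of nodes on two root-to-leaf paths of height $O(\log T)$, which is exactly your segment-tree-style canonical decomposition with at most a constant number of storing nodes per level, found by walking down the two boundary paths.
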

\begin{proof}
	Fix any edge $e$ with $L_e = (t_a, t_b]$.
	By definition, if $\mathcal{D}(u)$ contains $e$ for some internal node $u$, then $I_u\subseteq L_e$ and $I_w\nsubseteq L_e$.
	Thus $w$ must be an ancestor of the leaf node $(t_{a-1}, t_{a}]$ or $(t_b, t_{b+1}]$, i.e., $I_w$ intersects with $L_e$ but is not contained in $L_e$.	
	Therefore, every internal node $u$ that contains $e$ must be a child of some node on the path from $(t_{a-1},t_{a}]$ to the root, or child of some node on the path from $(t_{b},t_{b+1}]$ to the root.
	Since the height of tree of $O(\log T)$ and each internal node has two children, there are $O(\log T)$ internal nodes containing $e$ and they can be found in $O(\log T)$ time.
\end{proof}

Next we show how to handle retroactive operations by updating the tree $\mathcal{T}$.
Intuitively, since each retroactive operation changes the lifespan of a single edge, by Lemma~\ref{lemma:log_T_duplicates}, the operation can be handled by updating $O(\log T)$ internal nodes.
However, to maintain a balanced binary tree, sometimes we need to rebuild part of the tree, which increases the amortized update time.

\begin{lemma}\label{lemma:update}
	Let $t_\text{update}$ be the update time of the data structure maintained in an internal node. 
	Each retroactive operation can be handled in amortized $O(\log^2 T\cdot t_\text{update})$ time.
\end{lemma}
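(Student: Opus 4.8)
The plan is to separate the cost of a retroactive operation into two parts: the cost of updating the internal data structures $\mathcal{D}(\cdot)$ when no rebuild occurs, and the amortized cost of rebuilds triggered to restore Invariant~\ref{inv:balanced}. First I would handle the no-rebuild case. A single retroactive operation $\create(\ins(e),t)$, $\create(\del(e),t)$ or $\cancel(t)$ changes the lifespan of exactly one edge $e$ (an insertion or deletion either creates a new lifespan, truncates an existing one, or extends one), and it also inserts or deletes one leaf interval of $\mathcal{T}$, which splits or merges at most two existing leaves. The insertion/deletion of a leaf, together with the splitting of the affected leaf intervals, changes the set $\mathcal{D}(u)$ only along the $O(\log T)$ nodes on the root-to-leaf path(s) to the affected leaves and their $O(\log T)$ children (exactly the nodes identified in the proof of Lemma~\ref{lemma:log_T_duplicates}). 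For the changed edge $e$, by Lemma~\ref{lemma:log_T_duplicates} it is stored in $O(\log T)$ internal nodes both before and after the operation, and these can all be located in $O(\log T)$ time; we remove $e$ from each old location and insert it into each new location. In total this is $O(\log T)$ insertions/deletions into the internal data structures, each costing $t_\text{update}$, for a bound of $O(\log T \cdot t_\text{update})$ when no rebuild happens.

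Next I would bound the rebuild cost. Following the standard scapegoat-tree analysis, whenever Invariant~\ref{inv:balanced} is violated we find the highest node $u$ on the path from the modified leaf whose subtree is unbalanced and rebuild $\mathcal{T}(u)$ from scratch. Rebuilding a subtree of size $s$ costs $O(s)$ structural work plus the cost of recomputing the data structures $\mathcal{D}(w)$ for every node $w$ of the rebuilt subtree. The key observation is that by Lemma~\ref{lemma:log_T_duplicates} each of the $\le |I_u \cap \{\text{edges alive}\}|$ relevant edges is stored in only $O(\log T)$ nodes of the subtree, so recomputing all the $\mathcal{D}(w)$ inside $\mathcal{T}(u)$ amounts to $O(s\log T)$ insertions into internal data structures, i.e., $O(s\log T \cdot t_\text{update})$ time. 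The scapegoat potential argument shows that between two successive rebuilds at a node of subtree-size $\Theta(s)$, at least $\Omega(s)$ retroactive operations must have touched that subtree, so the $O(s\log T\cdot t_\text{update})$ rebuild cost amortizes to $O(\log T\cdot t_\text{update})$ per operation \emph{per level}; since a single operation can be blamed for imbalance at all $O(\log T)$ ancestors of the modified leaf, summing over levels gives amortized $O(\log^2 T\cdot t_\text{update})$.

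Combining the two parts, each retroactive operation costs $O(\log T\cdot t_\text{update})$ for the direct update plus $O(\log^2 T\cdot t_\text{update})$ amortized for rebuilds, giving the claimed amortized bound of $O(\log^2 T\cdot t_\text{update})$. The main obstacle I anticipate is the careful bookkeeping in the rebuild step: one must argue that recomputing the $\mathcal{D}(w)$ for a rebuilt subtree $\mathcal{T}(u)$ costs only $O(|\mathcal{T}(u)|\log T)$ internal-data-structure updates and not more. This requires noting that the only edges whose storage location changes under a rebuild of $\mathcal{T}(u)$ are those with a lifespan endpoint strictly inside $I_u$ (edges whose lifespan contains $I_u$ were stored at or above $u$ and are unaffected), that each such edge is re-inserted into only $O(\log T)$ nodes of the new subtree by Lemma~\ref{lemma:log_T_duplicates}, and that the number of such edges is $O(|\mathcal{T}(u)|)$ since each contributes an endpoint among the $O(|\mathcal{T}(u)|)$ leaves of $\mathcal{T}(u)$ — all of which follow from the structure already established in Lemmas~\ref{lemma:bst_coverage} and~\ref{lemma:log_T_duplicates}. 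The amortization itself is the textbook scapegoat argument and needs no new ideas.
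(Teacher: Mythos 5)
Your proposal is correct and follows essentially the same route as the paper: a direct $O(\log T\cdot t_\text{update})$ cost per operation via Lemma~\ref{lemma:log_T_duplicates}, a rebuild cost of $O(|\mathcal{T}(w)|\log T\cdot t_\text{update})$ justified by the observation that every edge stored inside the rebuilt subtree has a lifespan endpoint among its leaves (the paper phrases this via the sibling interval, giving $|\mathcal{D}(u)|=O(|\mathcal{T}(u)|)$), and the scapegoat charging argument with at most one charge per level, yielding $O(\log^2 T\cdot t_\text{update})$ amortized. No substantive differences from the paper's proof.
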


Due to space limit, we defer the proof of the above lemma to the full version of the paper~\cite{henzinger2019upper}, where we give data structures maintaining maximum degree, connectivity and MSF subject to retroactive operations.
The data structures follow the above framework, while for different problems the data structures maintained by internal nodes are different.

\section*{Acknowledgment}
The research leading to these results has received funding from the European Research Council under the European Community's Seventh Framework Programme (FP7/2007-2013) / ERC grant agreement No. 340506.
Monika Henzinger acknowledges the Austrian Science Fund (FWF) and netIDEE SCIENCE project P 33775-N.

Xiaowei Wu is funded by the Science and Technology Development Fund, Macau SAR (File no. SKL-IOTSC-2021-2023), the Start-up Research Grant of University of Macau (File no. SRG2020-00020-IOTSC).

{
	\bibliography{retro}
	\bibliographystyle{splncs04}
}

\end{document}